\documentclass[authoryear,a4paper, 12pt]{elsarticle}
\usepackage{natbib}
\usepackage{amsthm}
\usepackage{amssymb}
\usepackage{graphicx,amsmath}
\usepackage{tikz}
\usepackage{mathtools}
\usepackage{natbib}
\usepackage{booktabs,xcolor}
\usepackage{graphics}
\usepackage{enumitem}
\usepackage{subcaption}
\usepackage{appendix}
\usepackage{adjustbox}
\usetikzlibrary{patterns}
\usepackage{caption}
\usepackage{setspace}
\usepackage[colorlinks=true,linkcolor=blue, urlcolor=blue, citecolor=blue, breaklinks]{hyperref}

\newtheorem{proposition}{Proposition}
\newtheorem{corollary}{Corollary}

\theoremstyle{definition}

\usepackage{etoolbox}
\journal{Economics Letters.}
\patchcmd{\emailauthor}{(#2)}{}{}{}
\patchcmd{\urlauthor}{(#2)}{}{}{}

\usepackage{tikz}

\newcommand{\da}{\mathtt{DA}}

\newcommand{\rsd}{\text{RSD}} 
 
\newcommand{\rk}{\textup{rk}} 

\newcommand{\nee}{\textup{NE}} 
\newcommand{\een}{\textup{EN}}

\begin{document}

\begin{frontmatter}
	\title{The Distribution of Envy in Matching Markets}

	\author[add1]{Josu\'e Ortega}
	\author[add2]{Gabriel Ziegler}
	\author[add3]{R. Pablo Arribillaga}
	\author[add4]{Geng Zhao}
	
	\address[add1]{Queen's University Belfast}
	
	\address[add2]{Freie Universität Berlin and Berlin School of Economics}
	
	\address[add3]{Instituto de Matemática Aplicada San Luis,
Universidad Nacional de San Luis, and CONICET}
	
	\address[add4]{University of California, Berkeley}

		\date{\today}
		
		\begin{abstract}
			We study the distribution of envy in random matching markets under the Deferred Acceptance (DA) algorithm. Using tools from applied probability, we compute the expected number of proposing agents whom nobody envies and those who envy nobody. We obtain an exact finite-market expression for the former, based on a connection with the coupon collector problem, and asymptotic bounds for the latter. To put these quantities into perspective, we compare them to their counterparts under Random Serial Dictatorship (RSD): while RSD assigns a constant fraction of agents to their top choice, both DA and RSD leave exactly $H_n$ proposing agents unenvied in expectation. Our results show that these clearly unimprovable proposing agents constitute a vanishing fraction of the market.
			
		\end{abstract}
		
		\begin{keyword}
			two-sided matching \sep random markets.\\
			{\it JEL Codes:} C78.
		\end{keyword}
	\end{frontmatter}
	
	\newpage
	\setcounter{footnote}{0}
\section{Introduction}
\label{sec:intro}

Stable matchings in school choice are defined by the absence of justified envy. 
But how does the lack of justified envy compare to the equally important goal of avoiding envy altogether---whether or not it is justified?\footnote{The absence of envy has long been used as a benchmark of fairness in allocation problems \citep{foley1966resource,varian1974}.}
In this note, we focus particularly on two types of envy relations generated by the most prominent stable mechanism Deferred Acceptance (DA): the students who envy nobody and those who nobody envies. 
Understanding how many students lie in either of these categories is important for three different reasons. 

First, DA tends not to produce a Pareto-efficient matching. Thus, while we know much about the distribution of envy generated by efficient mechanisms (for example, that there is always one student who envies nobody and one student who nobody envies), such conclusions need not carry over to DA. Thus, we know little about the amount of envy generated by DA and its distribution.\footnote{For example, in a serial dictatorship, the first student who picks a school envies nobody; whereas nobody envies the last one to choose. To see that DA may produce an allocation where every student envies someone, we direct the reader to Example 1 in \cite{kesten2010school}.}

Second, students who envy nobody are closely related to those who are assigned to their most preferred alternative, and thus quantifying this top-choice share sheds light on a  welfare measure commonly used by policymakers \citep{dur2018first}. 

And third, understanding how many students lie in each of these two categories is useful to analyze the scope of Pareto-improvements upon the DA baseline. Any student who either envies nobody or is envied by nobody is unimprovable, in the sense that she cannot be made better off by any exchange that starts from the DA outcome \citep{tang2014new}. 
Therefore, the sizes of these sets provide a natural lower bound on the number of all unimprovable students.

We quantify how many students envy nobody or are envied by no one in i.i.d.\ random matching markets.
We derive an exact expression for the expected number of students whom nobody envies, and an asymptotic characterization for the expected number of students who envy nobody. 
Our exact analysis exploits the link between the sequential DA implementation proposed by \cite{mcvitie1971stable} and \cite{wilson1972} and the coupon collector problem. 
Finally we compare our results to their counterparts under Random Serial Dictatorship (RSD) to provide a useful comparison.

\section{Model and Results}
\label{sec:model}

We consider a two-sided matching market with $n$ agents on each side, who have strict preferences over the agents in the opposite side. All agents have complete preferences over all potential matches, i.e.\ the list length is $n$ for everyone. We are only concerned with the welfare of the proposing side of agents, so our model better reflects school choice with students applying to schools, who only have strict priorities over students (but no preferences that need to be taken into account for welfare considerations) rather than men proposing to women, where the welfare concerns both sides of the market.
	
We consider a \emph{random school choice problem} where strict preferences and priorities are drawn independently and uniformly at random, with an equal number of students and schools (and each school's quota is one).
We will use this framework assuming (as the majority of the literature) that the number of students and schools is equal and given by $n$. The benefit of imposing this strong assumption is that it allows us to obtain tractable results. Random matching problems have been useful to understand the average efficiency of several matching mechanisms \cite{wilson1972,knuth1976,pittel1989,liu2016,che2018payoff,che2019efficiency,pycia2019evaluating,nikzad2022rank,ortega2023cost,ortega2025,ortegascwe}, the expected number of stable matchings  \citep{pittel1992likely,immorlica2005}, the potential gains from strategic behaviour \citep{lee2016incentive} and the effect of adding one more agent to a matching problem \citep{ashlagi2017}. 

We denote by $P_n$ a random instance of a school choice problem of size $n$.
First we quantify the fraction of students whom nobody envies. Then we will proceed to do the same for the students who envy nobody.

\subsection{Students whom Nobody Envies}
We use $\nee_n$ to denote the expected number of students assigned to under-demanded schools, or, equivalently, the students whom nobody envies. 
Proposition \ref{thm:expected} shows that the expected number of under-demanded schools equals the $n$-th Harmonic number $H_n:= \sum_{k=1}^n \frac 1k$.

\begin{proposition}
	\label{thm:expected}
	In a random school choice problem of size $n$, $$\nee_n =H_n$$
\end{proposition}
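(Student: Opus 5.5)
The plan is to identify the unenvied students with schools that receive exactly one proposal in student-proposing DA. I would then run DA in the sequential McVitie--Wilson form with the principle of deferred decisions, so that the count becomes a statistic of the coupon collector process.

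\textbf{Step 1 (reduction).} In the DA outcome, a student $i$ is envied by some $j$ iff $j$ prefers $i$'s school $s$ to her own assignment. Under student-proposing DA this happens iff $j$ proposed to $s$ and was rejected. Hence nobody envies $i$ iff $s$ received proposals from exactly one distinct student during the run. With $n$ students, $n$ schools and complete lists, DA stops exactly when the last school receives its first proposal. So $\nee_n$ is the expected number of schools with exactly one distinct proposer at termination.

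\textbf{Step 2 (amnesiac coupon collector).} I run DA one proposal at a time, as in \cite{mcvitie1971stable} and \cite{wilson1972}. Each proposal is generated lazily: the proposing student draws a school uniformly from \emph{all} $n$ schools, and a draw of a school she has already proposed to is discarded as redundant. Since priorities are uniform and independent, this reproduces the law of DA. The sequence of draws is then i.i.d.\ uniform on $n$ schools, and the process stops when every school has been drawn, i.e.\ at the coupon collector time.

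The delicate point, which I expect to be the main obstacle, is that ``drawn exactly once'' must coincide with ``exactly one distinct proposer''. This holds because a redundant draw comes from a currently unmatched student $j$ hitting a school $s$ that already rejected her. Such an $s$ already holds a student other than $j$, so it had at least two distinct proposers. Redundant draws therefore only hit schools already outside the singleton set, and discarding them changes nothing. Consequently $\nee_n$ equals the expected number of coupons drawn exactly once when the collection is completed.

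\textbf{Step 3 (computation).} Label schools by the order in which they are first drawn, and fix the $k$-th distinct school. During the phase in which exactly $m$ schools have been collected ($m=k,\dots,n-1$), each draw falls on one of three targets:
\begin{itemize}
\item a new school, with probability $(n-m)/n$;
\item our fixed school, with probability $1/n$;
\item another collected school, with probability $(m-1)/n$.
\end{itemize}
Draws of the third kind are irrelevant to our school. So the probability that the phase ends before our school is hit again is $(n-m)/(n-m+1)$, and phases are independent by the Markov property. The product over all phases telescopes:
$$\prod_{m=k}^{n-1}\frac{n-m}{n-m+1}=\frac{1}{n-k+1}.$$
Summing over $k$ gives $\nee_n=\sum_{k=1}^n \frac{1}{n-k+1}=H_n$.
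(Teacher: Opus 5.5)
Your proof is correct. Steps 1 and 2 follow the paper's argument: sequential McVitie--Wilson DA with deferred decisions, termination at the coupon-collector time, and the key observation that a discarded repeat can only land on a school that already had two distinct proposers. So the under-demanded schools are exactly the coupons drawn once.

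You part ways with the paper in Step 3. The paper does not compute the expected number of singletons at completion. It imports $H_n$ from the coupon-collector literature, where it comes out of the joint law of the number of singletons and the completion time. You instead index schools by order of first appearance. You then compute $\Pr(\text{the $k$-th collected school is never redrawn})=\prod_{m=k}^{n-1}\frac{n-m}{n-m+1}=\frac{1}{n-k+1}$ by a phase-by-phase race argument, and sum. Your route is self-contained and elementary, and it gives exactly the expectation that the proposition needs. The cited result carries extra distributional information that the statement never uses.

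A nice by-product is that your product is literally the one in the paper's RSD proof (Proposition~\ref{prop:rsd}). The $k$-th collected school plays the role of the $k$-th dictator, and phase $m=\ell-1$ plays the role of dictator $\ell$. This makes the DA/RSD coincidence, which the paper calls striking, much less mysterious.

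One small correction to Step 2. The raw draws are i.i.d.\ uniform, and the lazy generation reproduces the law of DA, because the students' preferences are uniform, not because the priorities are. Priorities play no role here, so both your argument and the paper's hold for any fixed priority profile.
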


\begin{proof}
	We use an equivalent sequential implementation of Deferred Acceptance due to \citet{mcvitie1971stable} and \cite{wilson1972}. In this implementation, at each step a single currently unmatched student applies to her most preferred school not yet applied to; if rejected, she returns to the queue of unmatched students. McVitie and Wilson show that, for any queueing rule, this sequential algorithm produces exactly the same matching as standard DA.
	
	To analyze the number of under-demanded schools, we follow the principle of deferred decisions \citep{knuth1976} and construct students' preferences on the fly. For each student $i$, generate an infinite sequence $(\sigma_{i,1},\sigma_{i,2},\ldots)$ of i.i.d.\ draws from the uniform distribution over the set of schools $S$. The student's strict preference order is obtained by scanning this sequence and retaining only the first occurrence of each school, deleting repeats. This procedure yields a uniformly random strict preference order. When student $i$ needs to make her next proposal in the sequential DA algorithm, we read forward in her sequence until an untried school is encountered.
	
	Consider now the collection of all raw draws $\{\sigma_{i,t}\}$ that are read during the execution of DA, including those that are discarded because they correspond to repeated schools for a given student. By construction, each such draw is an independent uniform draw from $S$. Moreover, the first time a school $s$ appears among the raw draws that are read, it cannot be discarded (since it is not yet a repeat for that student), and therefore it generates a proposal to $s$. Hence, the time at which every school has appeared at least once among the raw draws is exactly the time at which every school has received at least one proposal. In a one-to-one market with $|I|=|S|=n$, once every school has received a proposal every school is holding some student, so the algorithm must terminate at that time.
	
	A school is under-demanded if and only if it receives proposals from exactly one distinct student during the entire DA execution. The key point is that discarded repeats are irrelevant for this event, for the following reason. Suppose student $i$ generates a draw equal to some previously seen school $x$, so this draw is discarded. Then $i$ must already have proposed to $x$ earlier, and $i$ is drawing again only because she became unmatched after that proposal. This can happen only if some other student proposed to $x$ and displaced $i$ there. In particular, $x$ must have received proposals from at least two distinct students. Therefore, any school that ever appears as a discarded repeat among the raw draws cannot be under-demanded. Equivalently, the under-demanded schools are exactly those schools that appear exactly once among the raw draws read up to termination.
	
	Consequently, the number of under-demanded schools in DA coincides exactly with the number of singleton coupon types in the classical coupon collector problem, evaluated at the stopping time when all coupon types have been collected. \citet{amy} derive the joint distribution of the number of singleton coupons and the stopping time, and show that the expected number of singleton coupons equals the harmonic number $H_n$. It follows that $\nee_n = H_n$, as claimed.
\end{proof}

Since $H_n = \log n + \gamma + o(1)$, Proposition~\ref{thm:expected} implies that the number of unenvied students grows only logarithmically with market size, and that the fraction of unenvied students satisfies $\nee_n/n \to 0$ as $n \to \infty$. Said differently, Proposition \ref{thm:expected} implies that very few students are not envied. For example, for $n=10{,}000$, we expected fewer than 10 unenvied students.

Another interesting feature of Proposition \ref{thm:expected} is that it applies to any positive integer $n$, and in particular to small markets too, unlike the vast majority of the results on random matching markets, which establish asymptotic results. To our knowledge, the only other small markets result is the equivalence amongst strategy-proof and Pareto-efficient mechanisms obtained
by \cite{pycia2019evaluating}.

Additionally, note that the in-degree of a student in the envy graph equals the number of proposals received by her matched school minus one. The arguments in the previous proof imply that a typical school receives $\Theta(\log n)$ proposals (see, e.g., \citealt{pittel1989}). Consequently, typical in-degrees in $G^{\da(P_n)}$ are also of logarithmic order.

\subsection{Students who Envy Nobody}

We use $\een_n$ to denote the expected number of students who envy nobody under DA, i.e.\ those assigned to their most preferred school.
To proceed, we derive an asymptotic approximation to the distribution of students' ranks under DA.

From classical results related to the coupon collector problem \citep{wilson1972, knuth1976, pittel1989}, it is known that when the number of students becomes large, the total number of applications made during the execution of the Deferred Acceptance algorithm is on the order of $n H_n$.\footnote{Here and henceforth, we refer to this asymptotic result and interpretation. For any number of students $n$, it is known that the number of applications lies between $n H_n - O(\log^4 n)$ and $(n-1)H_n+1$ \citep{knuth1976}.} Moreover, the order in which these applications are made is irrelevant for the algorithm's outcome \citep{mcvitie1971stable}. This implies that a typical student makes on the order of $H_n$ applications, and a typical school receives on the order of $H_n$ applications.

Consequently, the probability that a student's application is accepted at any given attempt is asymptotically of order $1/H_n$. This implies that the probability that a student is rejected at her first $k-1$ applications and matched at her $k$-th application is well approximated by
\[
\frac{1}{H_n}\left(1-\frac{1}{H_n}\right)^{k-1}
\]
Thus, the distribution of students' ranks under DA is well approximated by a geometric distribution with parameter $1/H_n$, yielding the following asymptotic approximation.\footnote{\cite{ashlagi2020tiered} establish a similar result to Proposition \ref{prop:geometric}, showing that ranks in a tiered matching model follow a geometric distribution.} \begin{proposition}\label{prop:geometric}
	For sufficiently large $n$ and $k\le n$,
\[
\Pr(\rk_i = k) \approx \frac{1}{H_n}\left(1-\frac{1}{H_n}\right)^{k-1}
\]

\end{proposition}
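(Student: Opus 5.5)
The plan is to make the heuristic preceding the statement precise by working with the same on-the-fly construction of preferences used in the proof of Proposition~\ref{thm:expected}, and by running DA in the sequential McVitie--Wilson form. Fix a student $i$ and put her last in the queue, so that she starts applying only once the other $n-1$ students have reached a matching among themselves. This is legitimate because the outcome does not depend on the queueing rule. From that point on, $i$'s proposals start a rejection chain. Her $k$-th application succeeds at the first step of the chain with the desired probability. More precisely, the chain ends (and fixes $i$'s final school) exactly when some proposal in the chain hits the unique school still unmatched. The event $\rk_i=k$ therefore becomes a statement about the order in which $i$ and the displaced students hit schools.

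First I would compute the exact law of $\rk_i$ in this representation. Each proposal, read from the raw i.i.d.\ draws, goes to a uniformly random school. Conditional on the current state, a proposal by $i$ to an occupied school $s$ is accepted with probability $1/(m_s+1)$, where $m_s$ is the number of proposals $s$ has received so far. This holds because priorities are uniform and the principle of deferred decisions applies to schools as well.

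Second, I would average over the state reached by the first $n-1$ students. By the coupon-collector facts quoted from \citet{knuth1976} and \citet{pittel1989}, the total number of proposals is $nH_n(1+o(1))$ with high probability. The load $m_s$ of a school chosen uniformly is therefore concentrated around $H_n$. A Poisson-type approximation then gives that the load seen by a uniformly random proposal is approximately size-biased. The resulting per-attempt acceptance probability should be $1/H_n+o(1/H_n)$ uniformly over the first $k\ll n$ attempts.

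Third, I would multiply the conditional rejection probabilities to obtain $\Pr(\rk_i=k)=\frac{1}{H_n}(1-\frac{1}{H_n})^{k-1}(1+o(1))$. This step requires showing that the successive acceptance events are asymptotically independent. The argument is that the schools $i$ visits are distinct and the loads of distinct schools are nearly independent.

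The main obstacle is the second step. The loads are not independent, $i$'s own rejections feed back into the chain, and for large $k$ (comparable to $n$) the errors compound through the power $k-1$. I would try first to prove the approximation with relative error $o(1)$ only for $k=O(H_n^2)$, since this range carries almost all of the probability mass. For larger $k$ I would settle for the tail bound $\Pr(\rk_i>k)\le (1-c/H_n)^{k}$, which is what the paper's ``$\approx$'' realistically requires. An alternative I would keep in reserve is Pittel's explicit integral representation of the rank distribution in random stable matching, from which the geometric form can be read off asymptotically.
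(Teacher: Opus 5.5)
The paper offers only the heuristic you start from. About $nH_n$ proposals are made, so each school receives about $H_n$ of them. Each attempt is then taken to succeed with probability about $1/H_n$, the probabilities are multiplied, and error control is left open. Placing $i$ last in the McVitie--Wilson queue, and using deferred decisions on priorities to get the exact conditional acceptance probability $1/(m_s+1)$, is a sound and sharper starting point. But all the content sits in steps two and three, and as set up they cannot deliver the ranges you claim.

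\textbf{The ranges.} A per-attempt error of $o(1/H_n)$ compounds to a factor $\exp(o(k/H_n))$. It therefore gives relative error $o(1)$ only for $k=O(H_n)$, not for $k=O(H_n^2)$, let alone $k\ll n$.

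\textbf{Independence.} The loads at the distinct schools $i$ visits are not nearly independent in the sense step three needs. They share the random length $T_{n-1}$ (number of raw draws) of the preliminary run, and $T_{n-1}/n=H_n-1+Z$ with $Z$ a non-degenerate $O(1)$ coupon-collector fluctuation. Conditionally on that run, $\rk_i$ is roughly geometric with parameter $1/(H_n+Z')$ for a random $O(1)$ shift $Z'$. So at $k=aH_n^2$ your target is off by a factor close to $\mathbb{E}[e^{aZ'}]$, which is not $1+o(1)$.

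\textbf{The fallback tail bound.} The bound $\Pr(\rk_i>k)\le(1-c/H_n)^k$ is false for large $k$. Two given schools stay unvisited for $nL$ draws with probability about $e^{-2L}$. On that event the preliminary run lasts at least $nL$ draws, so typical loads are of order $L$. Hence $\Pr(\rk_i>k)\gtrsim e^{-2L-O(k/L)}$ for $k$ up to a constant fraction of $n$. Taking $L\asymp\sqrt{k}$ gives a tail $e^{-O(\sqrt{k})}$, far above $e^{-ck/H_n}$ once $k\gg H_n^2/c^2$.

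Two further repairs are needed inside the product.
\begin{enumerate}
\item \emph{Acceptance versus being held at the end.} $1/(m_s+1)$ is the probability of being accepted now, whereas $\rk_i=k$ requires the $k$-th proposal to be held at termination. With $i$ last, the chain runs for about $n$ further draws, since it stops only when the unique empty school is drawn. Her school is therefore hit again about once in expectation, she is displaced with probability about $1/H_n$, and she then resumes proposing. This survival factor and the resumption paths must enter your product.
\item \emph{Which load to average over.} Her next school is uniform among her untried schools and independent of the loads. You should therefore average $1/(m_s+1)$ over the load of a uniformly chosen school, whose mean is about $H_n-1$ after the preliminary run. The size-biasing you invoke belongs to the paper's ``random proposal'' picture, not to your representation.
\end{enumerate}
Both are $O(1/H_n)$ relative corrections per attempt. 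They are harmless for $k=O(H_n)$ but are exactly the kind of term that compounds beyond it.

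What this machinery can actually prove is a distributional statement, for instance $\rk_i/H_n\Rightarrow\mathrm{Exp}(1)$, or a vanishing total-variation distance to $\mathrm{Geom}(1/H_n)$. That needs two things:
\begin{enumerate}
\item the product estimate only for $k\le CH_n$, where the common shift $Z$ is negligible;
\item the bound $\Pr(\rk_i>CH_n)\le e^{-C}+o(1)$.
\end{enumerate}
Neither a relative approximation uniform in $k\le n$ nor a geometric tail is available.
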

Because the rank distribution is approximately geometric, the following result is immediate.
\begin{corollary}\label{prop:topchoice}
	For sufficiently large $n$,
	\[
	\een_n \approx \frac{n}{H_n}
	\]
\end{corollary}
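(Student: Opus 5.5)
The plan is to write $\een_n$ as a sum of indicators and then apply Proposition~\ref{prop:geometric} at $k=1$. A student envies nobody under DA exactly when she is matched to her most preferred school. If she is matched to a lower-ranked school, she envies whoever holds her top school, since with $n$ students and $n$ schools every school is filled. So
\[
\een_n=\sum_{i=1}^n \Pr(\rk_i=1)=n\Pr(\rk_1=1),
\]
where the second equality uses exchangeability of students in the i.i.d.\ uniform model.

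Plugging $k=1$ into Proposition~\ref{prop:geometric} gives $\Pr(\rk_1=1)\approx 1/H_n$, and therefore $\een_n\approx n/H_n$. This is the immediate derivation the paper has in mind.

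To make ``$\approx$'' mean asymptotic equivalence, $\een_n H_n/n\to 1$, I would not rely on the heuristic geometric approximation alone. Instead I would argue directly with deferred decisions and the McVitie--Wilson sequential implementation, as in the proof of Proposition~\ref{thm:expected}. Student $1$ keeps her top school $s$ if and only if $s$ prefers her to every other student who ever proposes to $s$. Conditional on $s$ receiving $D_s$ distinct proposers, uniform random priorities make this probability $1/D_s$, independently of the proposal process. Hence
\[
\Pr(\rk_1=1)=\mathbb E\!\left[1/D_{s}\,\middle|\,\text{student 1 proposes to } s \text{ first}\right].
\]
By symmetry this equals $\mathbb E[1/D]$ for the school $s$, size-biased appropriately by the number of proposers.

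The main obstacle is showing that $D_s$, for a school receiving a given student's first proposal, concentrates around $\log n$. We know the total number of proposals is $nH_n(1+o(1))$ (\citealt{knuth1976,pittel1989}). I would first try the coupon-collector representation: the raw draws up to termination number about $n\log n$. Each school's count is then approximately Poisson with mean $\log n$, and conditioning on containing a specific draw only adds one. Poisson concentration then gives $\mathbb E[1/D]=(1+o(1))/\log n$, since $\Pr(D<(1-\epsilon)\log n)$ is $n^{-c}$, which is negligible against $1/\log n$. Finally, replacing $\log n$ by $H_n$ is free, and this yields Corollary~\ref{prop:topchoice}.
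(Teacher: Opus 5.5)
Your first two paragraphs are exactly the paper's argument. The paper sets $k=1$ in Proposition~\ref{prop:geometric} and multiplies by $n$, with no further justification. The problem is in your rigorous upgrade, at the step where you assert that, conditional on $D_s$, student~1 is $s$'s favorite proposer with probability $1/D_s$ ``independently of the proposal process.''

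That step is false. School $s$'s accept/reject decisions determine who continues proposing, and hence how many further students later reach $s$. Take $n=3$: student~1 ranks $a\succ q\succ r$, students 2 and 3 both rank $q\succ a\succ r$, school $q$ ranks $1\succ 2\succ 3$, and school $a$ ranks 3 above 1. Then 3 is rejected at $q$ and displaces 1 at $a$. Next, 1 displaces 2 at $q$, and 2 becomes a third proposer to $a$. Had $a$ preferred 1 to 3, the run would have stopped with only two proposers at $a$. So $D_s$ is inflated on exactly those paths where student~1 loses $s$. A direct computation at $n=3$ gives $\Pr(\rk_1=1)=23/36$ but $\mathbb E[1/D_s]=23/36-1/1296$. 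Your displayed identity is therefore wrong, not merely approximate.

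The repair is to condition on a different count. Let $\tilde D_s$ be the number of distinct proposers to $s$ in the modified market where $s$ ranks student~1 first and nothing else changes. If 1 proposes first, $s$ rejects every later proposer there without ever consulting the rest of its priority order. Hence $\tilde D_s$ is independent of $s$'s true priorities. Moreover, student~1 keeps $s$ in the true market if and only if $s$ ranks her above the other $\tilde D_s-1$ proposers of the modified run. If she is ranked above them, the two runs coincide step by step; otherwise she is displaced at the first such proposer and never returns. Therefore $\Pr(\rk_1=1)=\mathbb E[1/\tilde D_s]$ exactly.

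The raw-draw coupon-collector representation from Proposition~\ref{thm:expected} holds for any fixed priorities, so your concentration argument can then be run on $\tilde D_s$. One step is still missing there. The occupancy count you control is the number $N_s$ of raw draws of $s$, not the number of distinct proposers. The bound $\tilde D_s\le N_s$ handles one direction. The lower tail also needs the repeat draws of $s$ to be $o(\log n)$ outside an event of probability $o(1/\log n)$. This is true, since only the roughly $\log n$ students already rejected at $s$ can produce repeats, each with probability $1/n$ per draw. It still has to be argued.
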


Although Corollary \ref{prop:topchoice} is asymptotic and therefore weaker than Proposition~\ref{thm:expected}, which holds exactly for every finite $n$, it remains informative. The number of students who envy nobody is also small, though not as small as the number of unenvied students. For example, when $n=10{,}000$, the approximation in Corollary~\ref{prop:topchoice} predicts that roughly 1{,}100 students envy no one.\footnote{Our proof of Corollary~\ref{prop:topchoice} uses a coupon-collector approximation. Quantifying the associated error terms or convergence rates goes beyond the scope of this note. We expect these errors can be controlled using concentration bounds for occupancy processes. An alternative route to error control is to use a continuum (or hybrid) model, as in \citet{arnostiprobabilistic}.}

\subsection{Simulations}

Simulations confirm that our theoretical predictions are accurate even for relatively small market sizes (Figure~\ref{fig:singletons}).

\begin{figure}[h!]
	\centering
	\includegraphics[width=\textwidth]{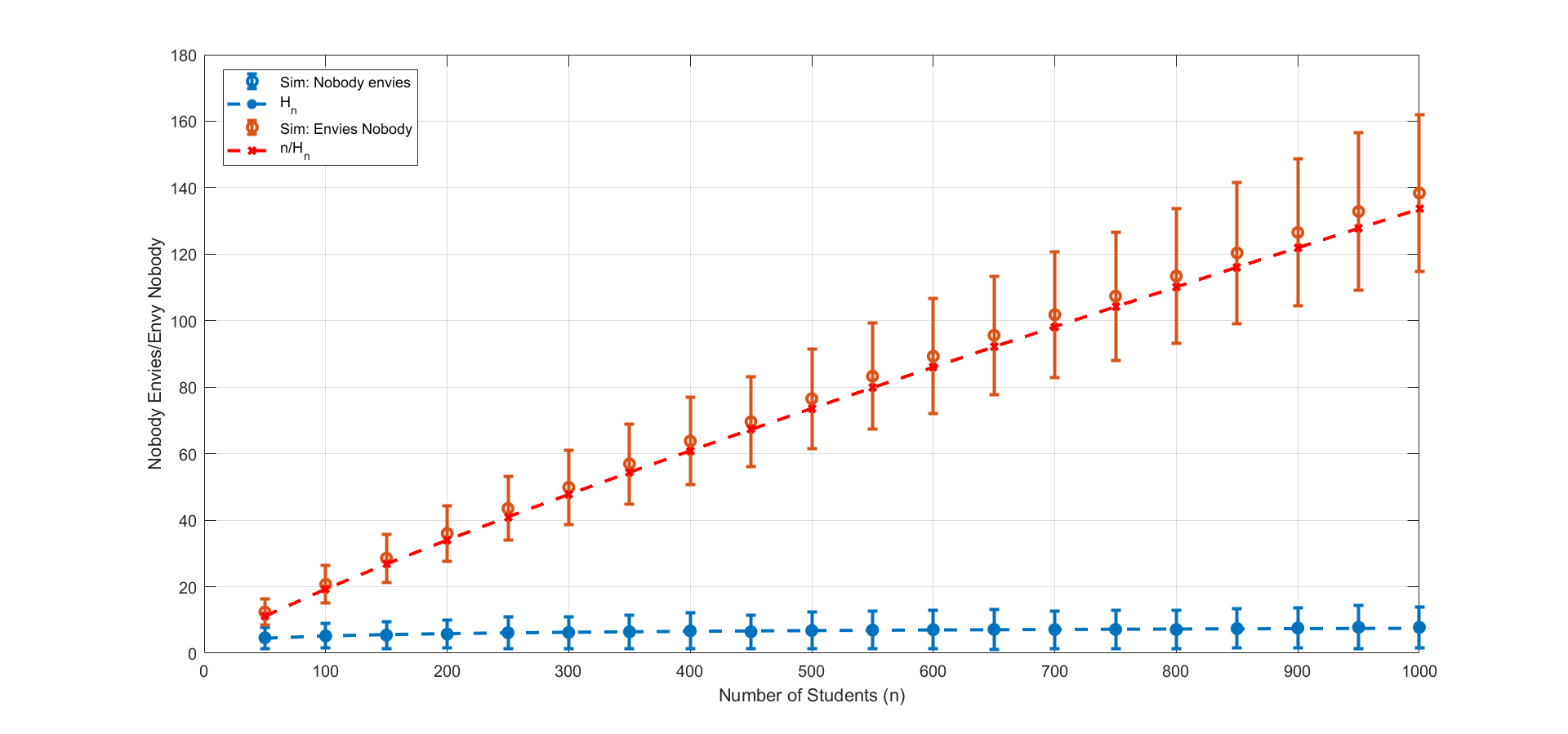} 
	\caption{Average number of students whom nobody envies (blue) or who envy nobody (red) under DA, together with their theoretical predictions (averages over 2{,}000 i.i.d.\ random problems for each $n$).}
	\label{fig:singletons}
\end{figure}

We remark that, while both quantities vanish as a fraction of $n$, they differ sharply in magnitude. Recall that $H_n \approx \log n$ grows extremely slowly. For instance, $H_{100}\approx 5.2$ and $H_{1000}\approx 7.5$. Thus the fraction of unenvied students, $H_n/n$, is tiny for any realistic market size. In contrast, $n/H_n$ remains substantial. Roughly 19 students envy nobody when $n=100$, and 133 when $n=1000$.

\section{Comparison with Random Serial Dictatorship}

A natural question is whether the number of students whom nobody envies or who envy nobody under DA is large or small compared to other mechanisms. We address this by comparing to Random Serial Dictatorship (RSD), where students are ordered uniformly at random and each selects her most preferred available school in turn.

Let $\een_n^\rsd$ denote the expected number of students who envy nobody under RSD. A straightforward calculation shows that $\een_n^\rsd=\frac{n+1}{2}$. To see this, note that the student in position $k$ gets her top choice if and only if it was not taken by the $k-1$ students before her, which occurs with probability $(n-k+1)/n$. Summing over $k$ gives $(n+1)/2$. Thus, unlike DA, RSD matches a constant fraction of students to their top choice.

Let $\nee_n^\rsd$ denote the expected number of students whom nobody envies under RSD. Surprisingly, it exactly equals the corresponding quantity for DA.

\begin{proposition}
	\label{prop:rsd}
	In a random school choice problem of size $n$,
	\[
	\nee_n^\rsd=H_n
	\]
\end{proposition}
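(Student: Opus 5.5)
We would prove Proposition~\ref{prop:rsd} by linearity of expectation over positions in the random order. For $k=1,\dots,n$, let $A_k$ be the event that nobody envies the student who picks in position $k$. We aim to show $\Pr(A_k)=\frac{1}{n-k+1}$. Summing over $k$ then gives $\nee_n^\rsd=\sum_{k=1}^n \frac{1}{n-k+1}=H_n$. Since preferences are i.i.d.\ uniform and independent of the order, we may fix the order as $1,\dots,n$ without loss of generality.

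\textbf{Step 1 (earlier students never envy).} Let $s_k$ be the school taken by student $k$. Any student $m<k$ chose while $s_k$ was still available. Because she picked her favourite available school, she weakly prefers her own school to $s_k$. Hence only students $m>k$ can envy student $k$, and $A_k$ is the event that every $m>k$ ranks her own assignment above $s_k$.

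\textbf{Step 2 (conditional probabilities by deferred decisions).} Let $R_m$ be the set of $n-m+1$ schools still available when student $m>k$ picks. Student $m$ does not envy $s_k$ exactly when her top choice within $R_m\cup\{s_k\}$ is not $s_k$. This set has $n-m+2$ elements and is determined by the preferences of students $1,\dots,m-1$. Student $m$'s preferences are uniform and independent of that history. So, conditional on everything that happened before $m$ picks (including the no-envy events of students $k+1,\dots,m-1$), the probability that $m$ does not envy $s_k$ is exactly $\frac{n-m+2-1}{n-m+2}=\frac{n-m+1}{n-m+2}$.

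\textbf{Step 3 (telescoping).} Chaining these conditional probabilities gives
\[
\Pr(A_k)=\prod_{m=k+1}^{n}\frac{n-m+1}{n-m+2}=\frac{n-k}{n-k+1}\cdot\frac{n-k-1}{n-k}\cdots\frac{1}{2}=\frac{1}{n-k+1}.
\]
For $k=n$ the product is empty and equals $1$, consistent with the observation that nobody envies the last student.

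The main obstacle is Step~2. We must justify that the events are handled exactly and not just approximately. The conditioning argument has to use only the independence of student $m$'s preference from the sigma-field generated by earlier students' preferences. The event for $m$ depends only on the relative ranking of $s_k$ within a set that is measurable with respect to that history, so the conditional probability is exact and the product formula holds without any independence assumption among the events themselves.
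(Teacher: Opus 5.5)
Your proposal is correct and follows the paper's proof essentially step for step. Both arguments restrict envy to later positions, compute the exact conditional probability $\frac{n-m+1}{n-m+2}$ that student $m$ does not rank $s_k$ first within the $(n-m+2)$-set of remaining schools plus $s_k$, telescope to $\frac{1}{n-k+1}$, and sum to $H_n$; your explicit justification of Step~1 and of the measurability behind chaining the conditional probabilities is slightly more careful than the paper's, but the argument is the same.
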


\begin{proof}
	Fix a position $k$. Under RSD, a student can only envy an earlier position, so $k$ can only be envied by positions $\ell=k+1,\dots,n$.
	
	Consider $\ell>k$. When student $\ell$ chooses, exactly $\ell-1$ schools have already been taken, so exactly 
	\[
	m:=n-\ell+1
	\]
	schools remain available; call this set $A_\ell$. Let $s_k$ be the school chosen at position $k$.
	
	Student $\ell$ envies position $k$ if and only if she prefers $s_k$ to every school in $A_\ell$, i.e., if and only if $s_k$ is her top-ranked school within the $(m+1)$-set $A_\ell\cup\{s_k\}$. Since preferences are uniform, conditional on the history up to $\ell-1$ each element of $A_\ell\cup\{s_k\}$ is equally likely to be top, hence
	\[
	\Pr(\ell \text{ envies } k \mid \text{history up to }\ell-1)=\frac{1}{m+1}=\frac{1}{n-\ell+2}
	\]
	Therefore
	\[
	\Pr(\ell \text{ does not envy } k \mid \text{history up to }\ell-1)=1-\frac{1}{n-\ell+2}=\frac{n-\ell+1}{n-\ell+2}
	\]
	
	Multiplying these conditional probabilities along $\ell=k+1,\dots,n$ yields
	\[
	\Pr(k \text{ is unenvied})
	=\prod_{\ell=k+1}^{n}\frac{n-\ell+1}{n-\ell+2}
	\]
	
	The product telescopes: the denominator in the factor for $\ell$ cancels the numerator in the factor for $\ell-1$, leaving only the first denominator $n-k+1$ and the last numerator $1$, so
	\[
	\prod_{\ell=k+1}^{n}\frac{n-\ell+1}{n-\ell+2}=\frac{1}{n-k+1}
	\]

	Summing over $k$ and using linearity of expectation gives
	\[
	\mathbb{E}\bigl[\#\{\text{unenvied students}\}\bigr]
	=\sum_{k=1}^{n}\Pr(k \text{ is unenvied})
	=\sum_{k=1}^{n}\frac{1}{n-k+1}
	=H_n
	\]
\end{proof}

Proposition~\ref{prop:rsd} puts our findings into perspective. The number of students receiving their top choice under DA ($\approx n/H_n$) is far below that of RSD ($(n+1)/2$). Yet both mechanisms leave exactly the same expected number of students unenvied: $H_n$. This coincidence is striking given the very different structures of the two mechanisms. Moreover, due to the equivalence between RSD and TTC with random endowments \citep{knuth1996,abdulkadiroglu1998}, this $H_n$ result extends to TTC as well. Table~\ref{tab:comparison} summarizes the comparison.

\begin{table}[h]
	\centering
	\begin{tabular}{lcc}
		\toprule
		& \textbf{DA} & \textbf{RSD} \\
		\midrule
		Students who envy nobody & $\approx n/H_n$ & $(n+1)/2$ \\
		Students whom nobody envies & $H_n$ & $H_n$ \\
		\bottomrule
	\end{tabular}
	\caption{Expected number of envy-free and unenvied students under DA and RSD.}
	\label{tab:comparison}
\end{table}

\section{Discussion}

This paper quantifies two channels through which students are guaranteed to be unimprovable under Deferred Acceptance: being unenvied and envying no one. We show that the expected size of both groups is small, and that their share of the market vanishes as the number of participants grows. Comparing DA to RSD reveals that while DA performs far worse in terms of top-choice assignments, both mechanisms yield exactly $H_n$ unenvied students---suggesting this quantity reflects a fundamental property of random markets rather than a deficiency of any particular mechanism.

Several questions remain open. Does the $H_n$ result extend to other mechanisms? What happens in many-to-one markets or when preferences are correlated? Our simulations indicate that unimprovable students remain rare in many-to-one settings and become even fewer when preferences are correlated. Finally, can the fraction of unimprovable students be characterized beyond the lower bound provided by unenvied and envy-free students? We confirm that this fraction converges to zero in a companion paper \citep{ortega2025}.

\section*{Acknowledgements} We are grateful to an anonymous reviewer and the Editor for very thoughtful comments which significantly improved our paper.

\newpage
	\singlespacing      
\setlength{\parskip}{-0.3em} 


\end{document}